\definecolor{lightgreen}{rgb}{.9,1,.9}
\newcolumntype{L}[1]{>{\raggedright\arraybackslash}p{#1}}
\newcolumntype{C}[1]{>{\centering\arraybackslash}p{#1}}
\newcolumntype{R}[1]{>{\raggedleft\arraybackslash}p{#1}}
\newcolumntype{M}[1]{>{\centering\arraybackslash}m{#1}}
\DeclarePairedDelimiterX{\normsz}[1]{\lVert}{\rVert}{#1}
\def\DnCNNast{{\text{DnCNN}^\ast}}
\def\defn{\,\coloneqq\,}
\def\argmin{\mathop{\mathsf{arg\,min}}} 
\def\prox{\mathsf{prox}}
\def\R{\mathbb{R}}
\def\N{\mathbb{N}}
\def\ebm{{\bm{e}}}
\def\xbm{{\bm{x}}}
\def\ybm{{\bm{y}}}
\def\ubm{{\bm{u}}}
\def\nbm{{\bm{n}}}
\def\sbm{{\bm{s}}}
\def\zbm{{\bm{z}}}
\def\zerobm{{\bm{0}}}
\def\d{{\, \mathrm{d}}}
\def\e{{\mathrm{e}}}
\def\Dsf{{\mathsf{D}}}
\def\Fsf{{\mathsf{F}}}
\def\xbmhat{{\widehat{\bm{x}}}}
\def\Ibm{{\bm{I}}}
\def\Hbm{{\bm{H}}}
\def\Dbm{{\bm{D}}}
\def\Ncal{\mathcal{N}}
\theoremstyle{definition}
\newtheorem{proposition}{Proposition}
\definecolor{pink}{HTML}{EB346F}
\definecolor{lightgreen}{RGB}{229,251,229}
\title{Boosting the Performance of Plug-and-Play Priors via Denoiser Scaling} 
\author{Xiaojian~Xu%
\thanks{Department of Computer Science \& Engineering, Washington University in St.~Louis, St.~Louis, MO 63130.}
\hspace{0.05em},
Jiaming~Liu%
\thanks{Department of Electrical \& Systems Engineering, Washington University in St.~Louis, St.~Louis, MO 63130.}
\hspace{0.05em}, 
Yu~Sun$^\ast$,\\
Brendt~Wohlberg%
\thanks{Theoretical Division, Los Alamos National Laboratory, Los Alamos, NM 87545 USA.}
\hspace{0.05em}, 
and Ulugbek~S.~Kamilov$^{\ast, \dagger}$}
\begin{document}

\maketitle


\begin{abstract}
\emph{Plug-and-play priors (PnP)} is an image reconstruction framework that uses an image denoiser as an imaging prior. Unlike traditional regularized inversion, PnP does not require the prior to be expressible in the form of a regularization function. This flexibility enables PnP algorithms to exploit the most effective image denoisers, leading to their state-of-the-art performance in various imaging tasks. In this paper, we propose a new \emph{denoiser scaling} technique to explicitly control the amount of PnP regularization. Traditionally, the performance of PnP algorithms is controlled via intrinsic parameters of the denoiser related to the noise variance. However, many powerful denoisers, such as the ones based on convolutional neural networks (CNNs), do not have tunable parameters that would allow controlling their influence within PnP. To address this issue, we introduce a scaling parameter that adjusts the magnitude of the denoiser input and output. We theoretical justify the denoiser scaling from the perspectives of proximal optimization, statistical estimation, and consensus equilibrium. Finally, we provide numerical experiments demonstrating the ability of denoiser scaling to systematically improve the performance of PnP for denoising CNN priors that do not have explicitly tunable parameters.
\end{abstract}


\section{Introduction}
\label{Sec:Intro}

Image formation is naturally posed as an inverse problem, which is often ill-posed. Regularized inversion is a widely adopted framework for dealing with such ill-posed inverse problems by taking advantage of prior information regarding the unknown image. Examples of popular imaging priors include nonnegativity, transform-domain sparsity, and self-similarity~\cite{Rudin.etal1992, Figueiredo.Nowak2001, Elad.Aharon2006, Danielyan.etal2012, Yang.Jacob2013, Kamilov2017}. Since imaging priors are often nondifferentiable, proximal algorithms~\cite{Parikh.Boyd2014} -- such as variants of iterative shrinkage/thresholding algorithm (ISTA)~\cite{Figueiredo.Nowak2003, Daubechies.etal2004, Bect.etal2004, Beck.Teboulle2009a} and alternating direction method of multipliers (ADMM)~\cite{Eckstein.Bertsekas1992, Afonso.etal2010, Ng.etal2010, Boyd.etal2011} -- are extensively used in image reconstruction. These algorithms avoid differentiating the regularizer by using the \emph{proximal operator}~\cite{Parikh.Boyd2014}, a mathematical concept that corresponds to regularized image denoising.

\emph{Plug-and-play priors (PnP)}~\cite{Venkatakrishnan.etal2013} is a framework for regularized inversion that does not require the image prior to be expressible in the form of a regularization function. Recent results have shown that by using advanced image denoisers in iterative image reconstruction, PnP algorithms achieve state-of-the-art performance in many imaging problems~\cite{Chan.etal2016, Sreehari.etal2016, Ono2017, Kamilov.etal2017, Zhang.etal2017a, Buzzard.etal2017, Teodoro.etal2019}. PnP algorithms have also been successfully combined with powerful denoising \emph{convolutional neural nets (CNNs)} for exploiting learned imaging priors while enforcing fidelity to the measured data~\cite{Sun.etal2018a, Tirer.Giryes2019, Meinhardt.etal2017, Ryu.etal2019, Song.etal2020}. (A similar strategy for using general image denoisers has also been adopted in related frameworks known as \emph{approximate message passing (AMP)}~\cite{Tan.etal2015, Metzler.etal2016, Metzler.etal2016a, Fletcher.etal2018} and \emph{regularization by denoising (RED)}~\cite{Romano.etal2017, Bigdeli.etal2017, Metzler.etal2018, Reehorst.Schniter2019, Mataev.etal2019, Sun.etal2019b}.)

The price of the convenience of using a general denoiser is that PnP algorithms lose interpretability as optimization methods for minimizing explicit convex objective functions. This complicates their theoretical analysis, a problem that was extensively addressed in several recent publications~\cite{Chan.etal2016, Sreehari.etal2016, Buzzard.etal2017, Sun.etal2018a, Ryu.etal2019}.  A distinct but related problem is the lack of a \emph{regularization parameter} for adjusting the relative strength between the prior and the data fidelity.
Unlike traditional regularized inversion, PnP does not have an \emph{explicit} relationship between the regularization parameter and the intrinsic parameters of the denoiser.
Instead, current PnP algorithms rely on indirect proxies for adjusting the relative strength of the prior, such as the noise-variance parameter (available for some denoisers) or the parameters of the iterative algorithms. However, many denoisers, such as those based on CNNs, do no have tunable parameters controlling their influence within PnP. This leads to their suboptimal performance, and requires training several CNN instances at multiple noise levels.

To address this issue, we introduce a tunable regularization parameter for PnP, which is independent from the intrinsic parameters of the denoiser or iterative algorithms. More specifically, the key contributions of this paper are as follows:
\begin{itemize}
\item We introduce a new \emph{denoiser scaling} technique that simply scales the denoiser input by a positive constant and its output by the inverse of the same constant. The technique is broadly applicable to all PnP algorithms, and provides a mechanism to adjust the denoiser strength in a way that is independent of traditional approaches.

\item We present a detailed theoretical justification of denoiser scaling for several classes of denoisers. We show that, unlike the intrinsic parameters of the denoiser, the new scaling parameter can be explicitly related to the trade-off between the data-fidelity and the prior.

\item We extensively validate denoiser scaling by showing its potential to address the suboptimal performance of denoising CNNs within PnP algorithms. Our results show that denoiser scaling is a simple yet effective approach for boosting the performance of CNN priors within PnP.

\end{itemize}

\begin{figure*}
\begin{minipage}[t]{.5\textwidth}
\begin{algorithm}[H]
\caption{$\mathsf{PnP}$-$\mathsf{ADMM}$}\label{alg:pnpadmm}
\begin{algorithmic}[1]
\State \textbf{input: } $\xbm^0$, $\sbm^0 = \zerobm$, $\gamma > 0$, and $\mu > 0$
\For{$k = 1, 2, \dots$}
\State $\zbm^k = \prox_{\gamma g}(\xbm^{k-1} + \sbm^{k-1})$
\State $\xbm^k = \Dsf_\sigma(\zbm^k - \sbm^{k-1})$
\State $\sbm^k = \sbm^{k-1} + (\xbm^k - \zbm^k)$
\EndFor\label{euclidendwhile}
\end{algorithmic}
\end{algorithm}%
\end{minipage}%
\hspace{0.25em}
\begin{minipage}[t]{.5\textwidth}
\begin{algorithm}[H]
\caption{$\mathsf{PnP}$-$\mathsf{ISTA}/\mathsf{PnP}$-$\mathsf{FISTA}$}\label{alg:pnpista}
\begin{algorithmic}[1]
\State \textbf{input: } $\xbm^0 = \sbm^0$, $\gamma > 0$, $\mu > 0$, and $\{q_k\}_{k \in \N}$
\For{$k = 1, 2, \dots$}
\State $\zbm^k = \sbm^{k-1}-\gamma \nabla g(\sbm^{k-1})$
\State $\xbm^k = \Dsf_\sigma(\zbm^k)$
\State $\sbm^k = \xbm^k + ((q_{k-1}-1)/q_k)(\xbm^k-\xbm^{k-1}) $
\EndFor\label{euclidendwhile}
\end{algorithmic}
\end{algorithm}%
\end{minipage}
\end{figure*}


\section{Background}
\label{Sec:Background}

Consider the recovery of an unknown image $\xbm \in \R^n$ from noisy measurements $\ybm \in \R^m$. When this inverse problem is ill-posed, it is essential to include prior information on the unknown image to regularize the solution. A common approach is to formulate the problem as regularized inversion, expressed as an optimization problem of the form
\begin{equation}
\label{Eq:Optimization}
\xbmhat = \argmin_{\xbm \in \R^n} f(\xbm) \quad\text{with}\quad f(\xbm) = g(\xbm)
 + \lambda h(\xbm),\end{equation}
 where $g$ is the data-fidelity term, $h$ is the regularizer, and $\lambda > 0$ is a regularization parameter that adjusts their relative strengths. For example, by setting
 $$g(\xbm) = -\log(p_{\ybm|\xbm}(\ybm|\xbm))\quad\text{and}\quad  h(\xbm) = -(1/\lambda)\log(p_\xbm(\xbm)),$$
where $p_{\ybm | \xbm}$ denotes the likelihood function characterizing the imaging system and $p_\xbm$ denotes a probability distribution over $\xbm$, one obtains the classical \emph{maximum-a-posteriori probability (MAP)} estimator. 
The regularized inversion framework~\eqref{Eq:Optimization} can accommodate a variety of data-fidelity and regularization terms. For example, the combination of a linear measurement model under an additive white Gaussian noise (AWGN) and isotropic \emph{total variation (TV)} regularization~\cite{Rudin.etal1992, Beck.Teboulle2009a} is obtained by setting
$$g(\xbm) = \frac{1}{2}\|\ybm-\Hbm\xbm\|_2^2 \quad\text{and}\quad h(\xbm) = \sum_{i = 1}^n \|[\Dbm\xbm]_i\|_2,$$
where $\Dbm$ is the discrete image gradient. Isotropic TV corresponds to the sparsity-promoting $\ell_1$-norm prior on the magnitude of the image gradient.

A large number of regularizers used in the context of imaging inverse problems, including $\ell_1$-norm and TV, are nonsmooth. Proximal algorithms~\cite{Parikh.Boyd2014} enable efficient minimization of nonsmooth functions, without differentiating them, by using the \emph{proximal operator}, defined as
\begin{equation}
\label{Eq:ProximalOperator}
\prox_{\tau h}(\zbm) \defn \argmin_{\xbm \in \R^n}\left\{\frac{1}{2}\|\xbm-\zbm\|_2^2 + \tau h(\xbm)\right\},
\end{equation}
where $\tau > 0$ is a scaling parameter that controls the influence of $h$. Note that the proximal operator can be interpreted as a MAP image denoiser for AWGN with variance of $\tau$.

The observation that the proximal operator is an image denoiser for AWGN  prompted the development of PnP~\cite{Venkatakrishnan.etal2013}, where the operator $\prox_{\tau h}(\cdot)$, within a proximal algorithm, is replaced with a more general image denoiser $\Dsf(\cdot)$, such as BM3D~\cite{Dabov.etal2007} or DnCNN~\cite{Zhang.etal2017}.  The PnP-ADMM algorithm~\cite{Venkatakrishnan.etal2013} is summarized in  Algorithm~\ref{alg:pnpadmm}, where in analogy to $\prox_{\tau h}(\cdot)$ we also introduce the parameter $\sigma > 0$ for the denoiser $\Dsf_\sigma(\cdot)$.
An alternative algorithm, PnP-ISTA~\cite{Kamilov.etal2017}, is summarized Algorithm~\ref{alg:pnpista}. When the values for $\{q_k\}$ in this algorithm are set such that $q_k = 1$ for all $k \geq 1$, the algorithm corresponds to the traditional form of ISTA~\cite{Beck.Teboulle2009a}. Alternatively, when the values for $\{q_k\}$ are adapted as
\begin{equation}
\label{Eq:Accelerate}
q_k = \frac{1}{2}\left(1+\sqrt{1+4q_{k-1}^2}\right),
\end{equation}
the algorithm corresponds to the accelerated variant of ISTA, known as Fast ISTA (FISTA)~\cite{Beck.Teboulle2009}.
In this paper, we will use the sequence $\{q_k\}$ as a mechanism for switching between the methods. As extensively discussed in~\cite{Sun.etal2018a}, both PnP-ADMM and PnP-ISTA have the same set of fixed points, but the algorithms offer complimentary strategies for handling the data-fidelity term, and have different convergence behaviour. In particular, PnP-ADMM is known to be fast for forward operators that can be inverted efficiently~\cite{Matakos.etal2013, Almeida.Figueiredo2013, Wohlberg2016}, while PnP-ISTA is well suited to nonlinear forward models where $\prox_{\gamma g}(\cdot)$ is computationally expensive to evaluate~\cite{Kamilov.etal2017, Kamilov.etal2016}.


In traditional proximal optimization, the scaling parameter $\tau$ of the proximal operator~\eqref{Eq:ProximalOperator} is directly related to the regularization parameter $\lambda$. For example, by setting $\tau = \gamma \lambda$ within traditional ADMM or FISTA, one minimizes the objective function in~\eqref{Eq:Optimization}. However, this explicit relationship between the scaling parameter and the regularization parameter is lost in the context of more general denoisers. Since some popular image denoisers, such as BM3D, accept a parameter corresponding to the noise variance, current PnP algorithms generally treat it as a proxy for the regularization parameter.
For example, if $\sigma$ in the notation for $\Dsf_\sigma(\cdot)$ in Algorithm~\ref{alg:pnpadmm} and Algorithm~\ref{alg:pnpista} denotes the standard deviation parameter accepted by the denoiser, the common strategy is to set it as $\sigma = \sqrt{\gamma \lambda}$~\cite{Chan.etal2016}. However, this strategy does not work with all denoisers, since some do not have a dedicated parameter for noise variance. In particular, many denoising CNNs do not have a parameter for the noise standard deviation, which is often addressed by training multiple neural nets at different noise levels and using $\sigma$ to select the most suitable one for a given problem. The denoiser scaling technique introduced in the next section enables the control of the regularization strength for denoisers that have no intrinsic parameters analogous to $\sigma$.


\section{Proposed Method}

We introduce denoiser scaling for explicitly controlling the regularization strength in PnP. Remarkably, the technique can be theoretically justified from multiple perspectives, including from that of proximal optimization, statistical estimation, and consensus equilibrium~\cite{Buzzard.etal2017}. Our experimental results in Section~\ref{Sec:Experiments} corroborate the ability of denoiser scaling to control the relative influence of the denoiser.

\subsection{Denoiser scaling}

Consider an image denoiser $\Dsf: \R^n \rightarrow \R^n$, where we omit the parameter $\sigma$ from the notation as it is not available for all denoisers. We define the scaled denoiser as
\begin{equation}
\label{Eq:ScalingTechnique}
\Dsf_\mu(\zbm) \defn (1/\mu) \Dsf(\mu \zbm),\quad \zbm \in \R^n,
\end{equation}
where we will refer to the parameter $\mu > 0$ as the \emph{denoiser scaling parameter}. In the rest of this paper, we assume that the denoisers in line 4 of Algorithm~\ref{alg:pnpadmm} and Algorithm~\ref{alg:pnpista} correspond to the scaled denoiser $\Dsf_\mu(\cdot)$. Note that the scaling in~\eqref{Eq:ScalingTechnique} is \emph{complimentary} to any intrinsic parameter of $\Dsf(\cdot)$. For example, if the underlying denoiser $\Dsf(\cdot)$ additionally accepts $\sigma$ as a parameter, $\Dsf_\mu(\cdot)$ will also accept the same parameter. However, as discussed below, the parameter $\mu$ will enable control of the strength of regularization when $\sigma$ is not available.

\subsection{Proximal operator denoisers}

We first consider the case in which $\Dsf(\cdot)$ is an \emph{implicit} proximal operator of some \emph{unknown} $h$, which is a common interpretation for PnP algorithms~\cite{Sreehari.etal2016}. For convenience, we assume $h$ to be closed, convex, and proper~\cite{Parikh.Boyd2014}; however, this assumption can be dropped as long as $\prox_h(\cdot)$ is well defined for the given $h$. We state the following result for the scaled denoisers.
\begin{proposition}
\label{Prop:ProxOp}
Suppose ${\Dsf(\zbm) = \prox_h(\zbm)}$, where $h$ is a closed, convex, and proper function. Then, we have
\begin{equation}
\Dsf_\mu(\zbm) = \prox_{\mu^{-2}h(\mu \cdot)}(\zbm), \quad \zbm \in \R^n.
\end{equation}
\end{proposition}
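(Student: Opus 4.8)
The plan is to work directly from the definitions, unwinding both sides of the claimed identity into minimization problems over $\R^n$ and showing they coincide. Starting from the scaling definition~\eqref{Eq:ScalingTechnique}, I would write $\Dsf_\mu(\zbm) = (1/\mu)\Dsf(\mu\zbm)$ and then substitute the hypothesis $\Dsf(\cdot) = \prox_h(\cdot)$, so that $\Dsf_\mu(\zbm) = (1/\mu)\prox_h(\mu\zbm)$. The immediate goal is to rewrite this in the form $\prox_{\mu^{-2}h(\mu\,\cdot)}(\zbm)$, so the main task is to track how the change of variables interacts with the quadratic penalty in the proximal operator.

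First I would expand the inner proximal operator using its definition~\eqref{Eq:ProximalOperator}: $\prox_h(\mu\zbm) = \argmin_{\ybm \in \R^n}\{\tfrac{1}{2}\|\ybm - \mu\zbm\|_2^2 + h(\ybm)\}$. Call the minimizer $\ybm^\ast$, so that $\Dsf_\mu(\zbm) = \ybm^\ast/\mu$. The natural substitution is $\ybm = \mu\xbm$, which lets me reparametrize the minimization over $\xbm$ instead; under this change the objective becomes $\tfrac{1}{2}\|\mu\xbm - \mu\zbm\|_2^2 + h(\mu\xbm) = \tfrac{\mu^2}{2}\|\xbm - \zbm\|_2^2 + h(\mu\xbm)$. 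Since $\mu > 0$, scaling the objective by the positive constant $\mu^{-2}$ does not change the location of the minimizer, giving $\argmin_{\xbm}\{\tfrac{1}{2}\|\xbm - \zbm\|_2^2 + \mu^{-2}h(\mu\xbm)\}$, which is exactly $\prox_{\mu^{-2}h(\mu\,\cdot)}(\zbm)$.

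The one step requiring care is confirming that dividing the minimizer $\ybm^\ast$ by $\mu$ corresponds to the minimizer of the reparametrized problem, \ie that $\xbm^\ast = \ybm^\ast/\mu$. This follows because the bijective, affine change of variables $\ybm = \mu\xbm$ with $\mu > 0$ maps argmins to argmins: multiplying an objective by a positive scalar and precomposing with an invertible linear map both preserve the minimizer up to the corresponding coordinate transformation. I would also note that the convexity, closedness, and properness of $h$ guarantee that $h(\mu\,\cdot)$ inherits these properties (composition with the invertible linear map $\xbm \mapsto \mu\xbm$ preserves them), so the proximal operator on the right-hand side is well defined and single-valued, matching the left-hand side.

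I do not anticipate a genuine obstacle here, since the argument is a routine change of variables; the only point where a sloppy proof could go wrong is the bookkeeping of the factor $\mu^{-2}$ on the regularizer versus the factor $1/\mu$ applied externally to the output. The cleanest presentation keeps these separate by first handling the external $1/\mu$ through the substitution $\ybm = \mu\xbm$ and only afterward rescaling the objective by $\mu^{-2}$, which is where the $\mu^{-2}$ in the final regularizer originates.
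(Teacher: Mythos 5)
Your argument is correct and is essentially identical to the paper's own proof: both unwind the definition of the proximal operator, perform the change of variables $\ybm = \mu\xbm$, and absorb the resulting factor of $\mu^2$ into the regularizer by rescaling the objective, which leaves the argmin unchanged. The only (harmless) difference is the order of the two steps and your added remark that $h(\mu\,\cdot)$ inherits closedness, convexity, and properness, which the paper leaves implicit.
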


\begin{proof}
See Appendix~\ref{Sec:ProofProxDen}.
\end{proof}

Proposition~\ref{Prop:ProxOp} indicates that by scaling an implicit proximal operator, one directly adjusts the strength of regularization via the scaling of the regularizer by $1/\mu^2$ and of the input to $h$ by $\mu$. While the relationship between $\mu$ and the regularization parameter in front of $h$ is not linear, $\mu$ still provides an explicit mechanism to tune the denoiser. If the denoiser corresponds to a \emph{$1$-homogeneous} regularizer $h$, we have $h(\mu \cdot) = \mu \cdot h(\cdot)$, which directly implies
\begin{equation}
\label{Eq:LinScal}
\Dsf_\mu(\zbm) = \prox_{\mu^{-1} h}(\zbm), \quad \zbm \in \R^n.
\end{equation}
This means that the denoiser scaling becomes equivalent to tuning the traditional regularization parameter in regularized inversion. Since any norm and semi-norm is $1$-homogeneous, the strength of many implicit and explicit regularizers, such as the $\ell_1$-norm or TV penalty, can be directly adjusted through denoiser scaling.
This equivalence is confirmed numerically for the TV denoiser in Section~\ref{Sec:Experiments}.

\subsection{Mean squared error optimal denoisers}

We now consider the case of a denoiser that performs the minimum mean-squared error (MMSE) estimation of a vector from its AWGN corrupted version~\cite{Kamilov.etal2013, Kazerouni.etal2013, Gribonval.Machart2013}. MMSE denoisers are optimal with respect to the ubiquitous image-quality metrics, such as signal-to-noise ratio (SNR). Additionally, many popular denoisers (such as BM3D and certain denoising CNNs) are often interpreted as \emph{empirical} MMSE denoisers. We state the following result for the scaled MMSE denoisers.

\begin{proposition}
\label{Prop:MMSEden}
Suppose $\Dsf(\cdot)$ computes the MMSE solution of the following denoising problem
$$\zbm = \xbm + \nbm \quad\text{with}\quad \xbm \sim p_\xbm\quad\text{and}\quad \nbm \sim \Ncal(\zerobm, \Ibm),$$
where ${\Ibm}$ is an identity matrix. Then, the denoiser~\eqref{Eq:ScalingTechnique} computes the MMSE solution of
$$
\zbm = \ubm + \ebm \quad\text{with}\quad \ubm \sim p_\xbm(\mu\cdot)\quad\text{and}\quad \ebm \sim \Ncal(\zerobm, \mu^{-2}\Ibm).
$$
\end{proposition}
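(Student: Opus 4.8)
The plan is to write both MMSE estimators explicitly as posterior means and show that a single change of variables $\xbm = \mu\ubm$ (equivalently $\ubm = \xbm/\mu$) converts the new denoising problem into the original one, up to a deterministic rescaling of the observation. Recall that the MMSE estimator is the conditional expectation of the signal given the measurement, which for the original problem reads
\begin{equation*}
\Dsf(\zbm) = \mathbb{E}[\xbm \mid \xbm + \nbm = \zbm] = \frac{\int \xbm\, p_\xbm(\xbm)\, \exp\!\left(-\tfrac{1}{2}\|\zbm - \xbm\|^2\right)\d\xbm}{\int p_\xbm(\xbm)\, \exp\!\left(-\tfrac{1}{2}\|\zbm - \xbm\|^2\right)\d\xbm},
\end{equation*}
since $\nbm \sim \Ncal(\zerobm, \Ibm)$. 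For the new problem, with prior $\ubm \sim p_\xbm(\mu\cdot)$ and noise $\ebm \sim \Ncal(\zerobm, \mu^{-2}\Ibm)$, the target MMSE estimate of $\ubm$ given $\zbm = \ubm + \ebm$ is the analogous ratio with $p_\xbm(\mu\ubm)$ in place of the prior factor and $\exp(-\tfrac{\mu^2}{2}\|\zbm - \ubm\|^2)$ in place of the Gaussian factor.

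First I would substitute $\xbm = \mu\ubm$ in the numerator and denominator of the $\ubm$-posterior-mean. The Jacobian factor $\mu^{-n}$ produced by $\d\ubm = \mu^{-n}\d\xbm$ appears identically in both integrals and therefore cancels, as do the normalization constants of the prior and of the Gaussian. The key algebraic identity is that the rescaled noise exponent collapses to a unit-variance exponent recentered at $\mu\zbm$: $\tfrac{\mu^2}{2}\|\zbm - \ubm\|^2 = \tfrac{1}{2}\|\mu\zbm - \xbm\|^2$. After the substitution the two integrals become exactly those defining $\Dsf(\mu\zbm)$, while the factor $\ubm = \xbm/\mu$ in the numerator pulls out a scalar $1/\mu$. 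Collecting terms yields $\mathbb{E}[\ubm \mid \zbm] = (1/\mu)\,\Dsf(\mu\zbm) = \Dsf_\mu(\zbm)$, which is the claim.

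Equivalently, and perhaps more transparently, I could argue at the level of random variables: setting $\xbm \defn \mu\ubm$ and $\nbm \defn \mu\ebm$ shows that $\xbm \sim p_\xbm$ and $\nbm \sim \Ncal(\zerobm,\Ibm)$ reproduce the original model, while $\mu\zbm = \mu\ubm + \mu\ebm = \xbm + \nbm$ identifies the rescaled observation as a sample from the original problem. Since $\zbm$ and $\mu\zbm$ are in deterministic one-to-one correspondence, conditioning on either is equivalent, so $\mathbb{E}[\ubm\mid\zbm] = (1/\mu)\,\mathbb{E}[\xbm\mid\mu\zbm] = (1/\mu)\,\Dsf(\mu\zbm)$ by linearity of the conditional expectation. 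The only point requiring care is the bookkeeping of the $\mu^n$ normalization and Jacobian factors: one must confirm that reinterpreting $p_\xbm(\mu\cdot)$ as a bona fide density (with its attendant $\mu^n$ constant) together with the $\mu^n$ factor arising from the covariance $\mu^{-2}\Ibm$ both cancel in the MMSE ratio, so that no stray powers of $\mu$ survive beyond the single $1/\mu$ coming from $\ubm = \xbm/\mu$. This is the main, though modest, obstacle; everything else is a direct substitution.
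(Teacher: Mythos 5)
Your proof is correct and follows essentially the same route as the paper's: both write the MMSE estimator as a ratio of posterior integrals, use the identity $\tfrac{\mu^2}{2}\|\zbm-\ubm\|^2 = \tfrac{1}{2}\|\mu\zbm-\xbm\|^2$ under the substitution $\xbm=\mu\ubm$, and note that Jacobian and normalization constants cancel in the ratio, leaving the single factor $1/\mu$ from $\ubm=\xbm/\mu$. Your additional random-variable formulation is a clean restatement of the same argument, not a different method.
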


\begin{proof}
See Appendix~\ref{Sec:ProofMMSEDen}.
\end{proof}

Similarly to Proposition~\ref{Prop:ProxOp},
Proposition~\ref{Prop:MMSEden} indicates that the scaling of the MMSE denoiser enables the direct control of the strength of regularization via the scaling of the noise variance by $1/\mu^2$ and of the input to the prior $p_\xbm$ by $\mu$. If $p_\xbm$ assigns equal probabilities to all images that have undergone rescaling, we have $p_\xbm(\mu \cdot) = p_\xbm(\cdot)$, which implies that the scaled denoiser directly adjusts the variance of AWGN in the MMSE estimation.


\begin{figure}[t]
        \centering\includegraphics[width=12cm]{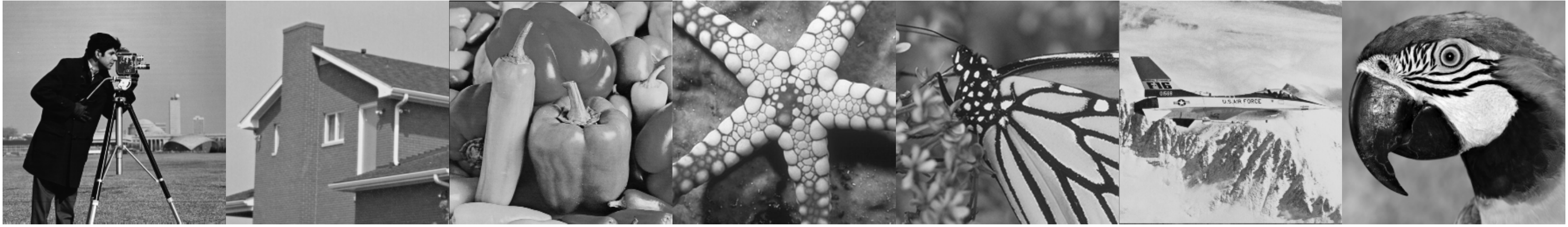}
        \caption{Test images used for the quantitative performance evaluation. From left to right: \textit{Cameraman}, \textit{House},  \textit{Pepper},  \textit{Starfish},  \textit{Butterfly},  \textit{Plane},  \textit{Parrot}.}
        \label{Fig:testimg}
\end{figure}

\begin{table*}[t]
        \centering
        \caption{SNR performances of several image denoisers at different noise levels. }
        \vspace{5pt}
        \footnotesize
        \label{Tab:denoise}
        \begin{tabular*}{16.5cm}{M{40pt}	cC{35pt}C{35pt}		cC{35pt}C{35pt}C{35pt}		cC{35pt}C{35pt}C{35pt}	}
                \toprule
                \multirow{2}{*}{\textbf{Input SNR}}	&&\multicolumn{2}{c}{\textbf{TV}}	&&\multicolumn{3}{c}{\textbf{BM3D}}		&&\multicolumn{3}{c}{\textbf{$\DnCNNast$}}\\
                \cmidrule{3-4}\cmidrule{6-8}\cmidrule{10-12}

                &&\textbf{Scaled}&\textbf{Optimized}
                &&\textbf{Unscaled}&\textbf{Scaled} &\textbf{Optimized}
                &&\textbf{Unscaled}&\textbf{Scaled} &\textbf{Optimized}\\

                \cmidrule{1-12}
                \textbf{15 dB}  &&22.77 &22.77		&&16.58 &24.55&24.51		&&16.49 &24.30&24.21\\
                \textbf{20 dB}  &&25.80 &25.80		&&25.58 &27.34&27.34		&&24.83 &27.42&27.23\\
                \textbf{25 dB}  &&29.09 &29.09		&&29.61 &30.43&30.43		&&29.83 &30.51&30.35 \\
                \textbf{30 dB}  &&32.66 &32.66		&&33.49 &33.63&33.73		&&33.54 &33.82&33.72\\

                \bottomrule

        \end{tabular*}
\end{table*}

\begin{figure*}[t]
        \centering\includegraphics[width=\textwidth]{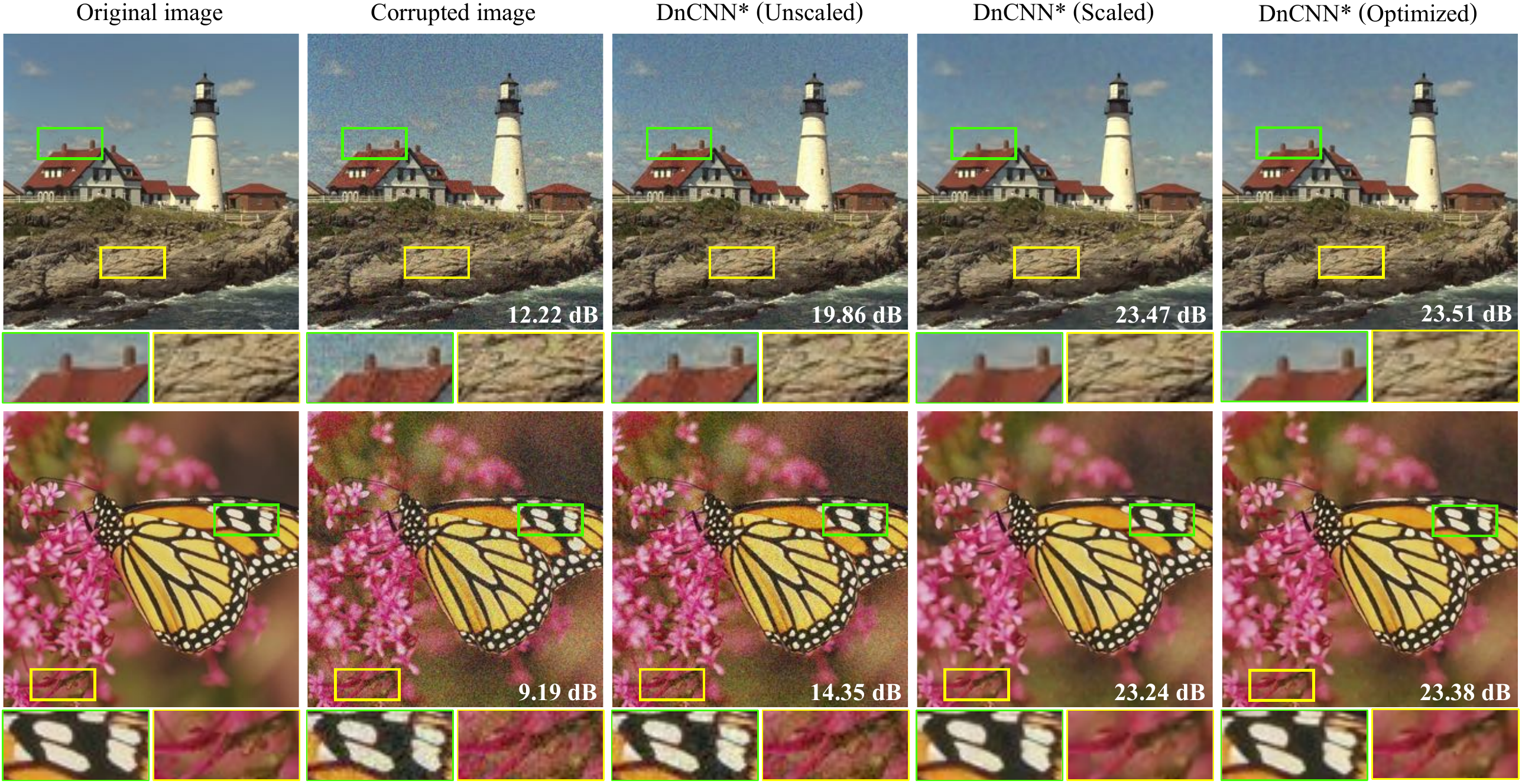}
        \caption{Illustration of denoiser scaling on two color images: \textit{Lighthouse} (top) and \textit{Monarch} (bottom). The noise levels in the top and bottom rows are $\sigma = 30$ and $\sigma = 40$, respectively. $\DnCNNast$ (Optimized) corresponds to the CNN denoiser trained using the correct noise levels. On the other hand, $\DnCNNast$ (Unscaled) and $\DnCNNast$ (Scaled) use the same CNN trained at a mismatched noise level of $\sigma = 20$. By adjusting $\mu$, $\DnCNNast$ trained at a suboptimal $\sigma$ can be made to match the performance of $\DnCNNast$ trained using the correct noise level.}
        \label{Fig:denoise}
\end{figure*}

\subsection{Consensus equilibrium interpretation}

Consensus equilibrium (CE)~\cite{Buzzard.etal2017} is a recent framework for interpreting the solution of regularized inversion methods in terms of a set of balancing equations for the
forward and prior models, without an explicit cost function The solutions obtained by both PnP-ADMM and PnP-ISTA can be expressed in terms of the same set of CE equations
\begin{subequations}
\begin{align}
&\xbm = \Fsf(\xbm + \sbm) \\
&\xbm = \Dsf(\xbm - \sbm),
\end{align}
\end{subequations}
where $\Fsf(\cdot) \defn \prox_{\gamma g}(\cdot)$ and  $\gamma > 0$ is an algorithm tuning parameter. We use the CE framework to state the following result for the scaled denoisers.

\begin{proposition}
\label{Prop:CEden}
Let $g$ be a smooth, convex function and $\Dsf(\cdot)$ be a continuous denoiser. The fixed point $(\xbm, \sbm)$ of PnP-ADMM and PnP-ISTA for the scaled denoiser satisfies
$$
\begin{dcases}
\mu\xbm = \prox_{(\gamma\mu^2)g(\cdot/\mu)}\left(\mu\xbm + \sbm\right)\\
\mu\xbm = \Dsf\left(\mu\xbm - \sbm\right).
\end{dcases}
$$
\end{proposition}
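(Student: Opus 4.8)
The plan is to start from the consensus-equilibrium characterization of the fixed points and push the denoiser-scaling definition through it by a change of variables. First I would invoke the discussion preceding the proposition: any fixed point $(\xbm,\sbm)$ of PnP-ADMM or PnP-ISTA solves the CE equations $\xbm = \prox_{\gamma g}(\xbm+\sbm)$ and $\xbm = \Dsf_\mu(\xbm-\sbm)$, where the second line now uses the \emph{scaled} denoiser. Substituting $\Dsf_\mu(\cdot)\defn(1/\mu)\Dsf(\mu\,\cdot)$ into the prior equation and multiplying through by $\mu$ gives $\mu\xbm = \Dsf(\mu\xbm-\mu\sbm)$; tracking the rescaled dual quantity $\mu\sbm$ (i.e.\ relabeling the dual variable by the factor $\mu$) yields exactly the second line of the claim.

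For the data-fidelity (prox) equation, I would exploit the smoothness and convexity of $g$ to replace the proximal identity with its first-order optimality condition. Concretely, $\xbm=\prox_{\gamma g}(\xbm+\sbm)$ is equivalent to $\sbm = \gamma\nabla g(\xbm)$, since stationarity of the strongly convex objective defining the proximal operator is necessary and sufficient. The remaining task is to verify that $\mu\xbm$ satisfies the analogous optimality condition for the candidate operator $\prox_{(\gamma\mu^2)g(\cdot/\mu)}$ evaluated at $\mu\xbm+\mu\sbm$. Writing $P(\vbm)\defn g(\vbm/\mu)$ and applying the chain rule gives $\nabla P(\vbm)=(1/\mu)\nabla g(\vbm/\mu)$, so the stationarity condition for $\prox_{\gamma\mu^2 P}$ at the point $\vbm=\mu\xbm$ collapses to $\mu\sbm=\gamma\mu\nabla g(\xbm)$, that is, to $\sbm=\gamma\nabla g(\xbm)$. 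This is precisely the condition established above, so $\mu\xbm = \prox_{(\gamma\mu^2)g(\cdot/\mu)}(\mu\xbm+\mu\sbm)$, which is the first line of the claim.

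I expect the main subtlety to be bookkeeping the two factors of $\mu$ correctly: the scaling of the data term by $\gamma\mu^2$ and the dilation of its argument by $1/\mu$ must combine, through the chain-rule factor $1/\mu$, to reproduce the \emph{same} stationarity condition as the unscaled problem, and the dual variable appearing in the conclusion is the rescaled $\mu\sbm$ rather than $\sbm$ itself. Care is also needed to justify passing from the prox identity to the gradient equation, which requires $\nabla g$ to be single-valued; this is exactly where the smoothness hypothesis on $g$ enters, and it is the only place the assumptions beyond continuity of $\Dsf$ are used.
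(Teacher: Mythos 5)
Your proposal is correct and follows essentially the same route as the paper's proof: both reduce the proximal identities to their first-order stationarity conditions, use the chain rule to turn $(\gamma\mu^2)g(\cdot/\mu)$ into the condition $\mu\sbm = \gamma\mu\nabla g(\xbm)$, and identify the dual variable in the conclusion as the rescaled quantity (the paper calls it $\zbm = \gamma\mu\nabla g(\xbm)$, which is exactly your $\mu\sbm$). The only cosmetic difference is that the paper starts from the PnP-ISTA fixed-point equation $\xbm = \Dsf_\mu(\xbm - \gamma\nabla g(\xbm))$ rather than from the CE pair, but these are equivalent entry points.
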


\begin{proof}
        See Appendix~\ref{Sec:ProofCEden}.
\end{proof}
This result establishes a direct relationship between the scaling parameter $\mu > 0$ and the rescaling of $g$ with respect to the denoiser. Note that while Proposition~\ref{Prop:ProxOp} and~\ref{Prop:MMSEden} discuss the impact of denoisier scaling on the implicit prior, Proposition~\ref{Prop:CEden} highlights its impact on the relative influence between the denoiser and the data-fidelity term via the weighting in front of $g$. Since our only assumption is the continuity of the denoiser, Proposition~\ref{Prop:CEden} also relaxes the assumptions on the denoiser. While the relationship between $\mu$ and the set of equilibrium points is nontrivial, the proposition implies that one can still adjust the amount of regularization by tuning $\mu$.
\begin{figure*}[t]
        \centering\includegraphics[width=\textwidth]{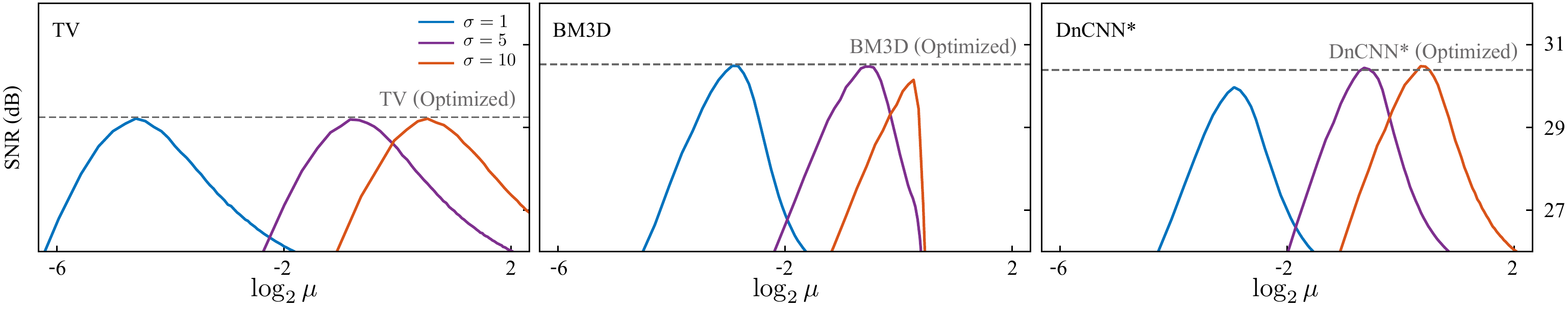}
        \caption{The influence of the scaling parameter $\mu$ on the denoising performance for \textit{Pepper} for AWGN with input SNR of 25 dB ($\sigma = 7.23$). We show the SNR evolution against $\mu$ for the variants of TV, BM3D, and $\DnCNNast$ designed for the mismatched noise levels. The horizontal line shows the performance of the corresponding denoiser optimized for input SNR of 25 dB. Note how by adjusting $\mu$, one can achieve nearly optimal performance for all three denoisers.}
        \label{Fig:scale}
\end{figure*}

\begin{figure*}[t]
        \centering\includegraphics[width=\textwidth]{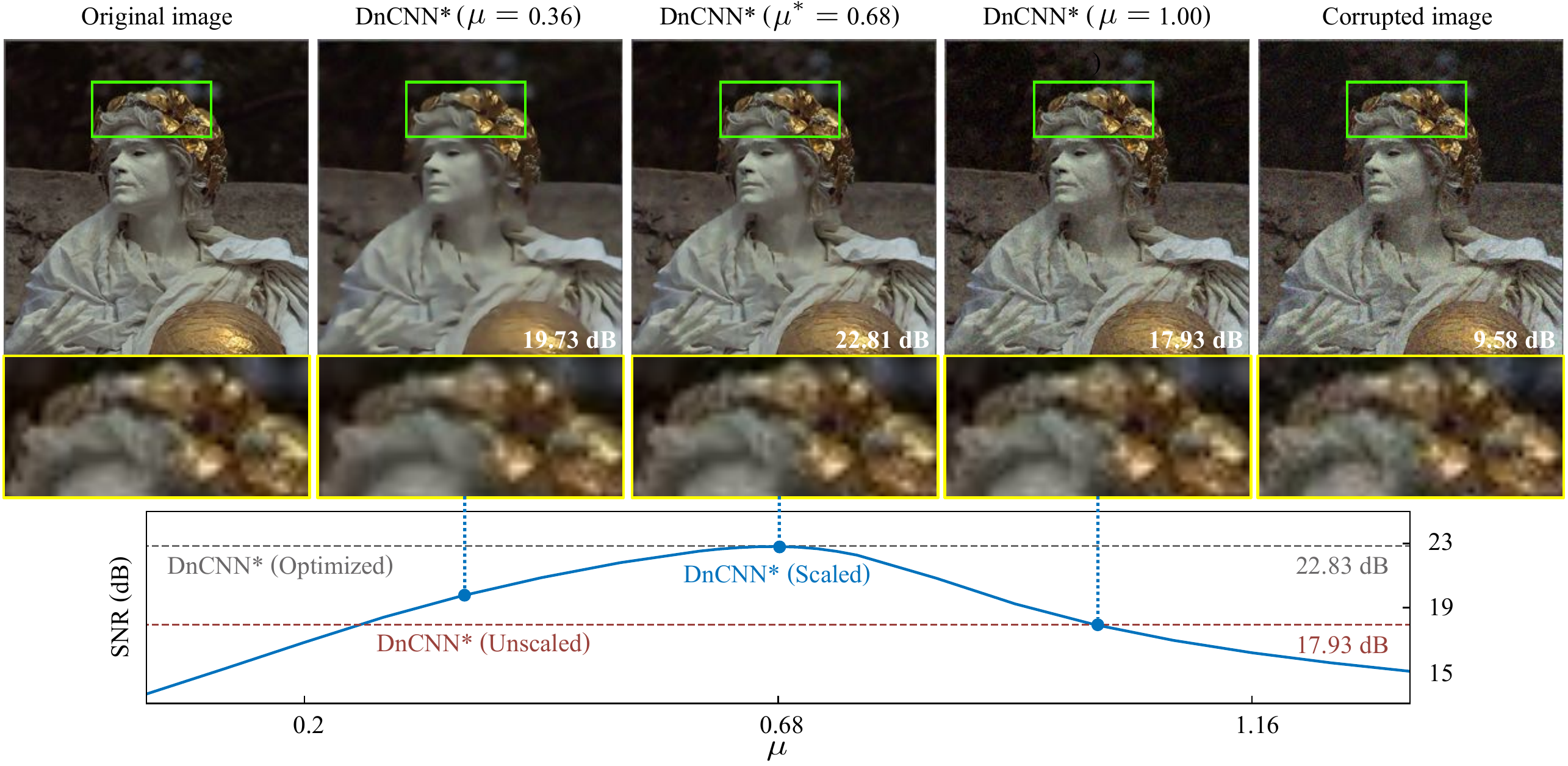}
        \caption{The influence of the denoiser scaling parameter $\mu$ on the denoising performance of $\DnCNNast$ on the color image \textit{Statue}. The noise in the image corresponds to $\sigma = 30$, while $\DnCNNast$ was trained for the removal of noise corresponding to $\sigma = 20$. The top row images illustrate the visual performance at $\mu$ values of $0.36$, $0.68$, and $1.00$. The bottom plot shows the SNR evolution against the parameter $\mu$ for a wider range of values. The scaled $\DnCNNast$ achieves its best performance at $\mu^\ast = 0.68$. Note how unscaled $\DnCNNast$ ($\mu = 1.00$) leads to an insufficient amount of regularization, while a smaller scaling parameter $\mu = 0.36$ leads to oversmoothing. This figure highlights the ability of $\mu$ to control the strength of regularization with a CNN denoiser.}
        \label{Fig:evolution}
\end{figure*}

\section{Numerical Validation}
\label{Sec:Experiments}







\begin{table*}[t]
        \centering
        \caption{Average SNRs obtained for different inverse problems and image denoisers.}
        \vspace{5pt}
        \footnotesize
        \label{Tab:mri_sr}
        \begin{tabular*}{16.5cm}{M{65pt}M{40pt}	cC{30pt}C{30pt}		cC{35pt}C{30pt}C{30pt}		cC{35pt}C{30pt}}
                \toprule

                        \multirow{2}{*}{\textbf{Forward model}}& \multirow{2}{*}{\textbf{Input SNR} }
                         &&\multicolumn{2}{c}{\textbf{TV}}
                         &&\multicolumn{3}{c}{\textbf{BM3D}}
                         &&\multicolumn{2}{c}{\textbf{$\DnCNNast$}}\\
                        \cmidrule{4-5}\cmidrule{7-9}\cmidrule{11-12}

                        &
                        &&\textbf{Scaled}&\textbf{Optim}
                        &&\textbf{Unscaled}&\textbf{Scaled} & \textbf{Optim}
                        &&\textbf{Unscaled}&\textbf{Scaled}      \\

                        \cmidrule{1-12}

                \multirow{2}{*}{\textbf{Fourier}}
                & \textbf{30 dB}        &&27.15 &27.15	&&27.29 &28.49&28.51	&&27.92 &28.44 \\
                & \textbf{40 dB}        &&27.79 &27.79	&&28.97 &29.17&29.21	&&29.72 &29.89 \\
                \noalign{\vskip 1pt}\cdashline{1-12}\noalign{\vskip 3pt}

                \multirow{2}{*}{\textbf{Super-resolution}}
                & \textbf{30 dB}        &&19.86 &19.87	&&14.32&20.65&20.64		&&13.59 &20.05 \\
                & \textbf{40 dB}        &&22.42 &22.41	&&22.64&22.78&22.71		&&23.03 &23.07 \\
                \bottomrule
        \end{tabular*}
\end{table*}

\begin{figure*}[t]
        \centering\includegraphics[width=\textwidth]{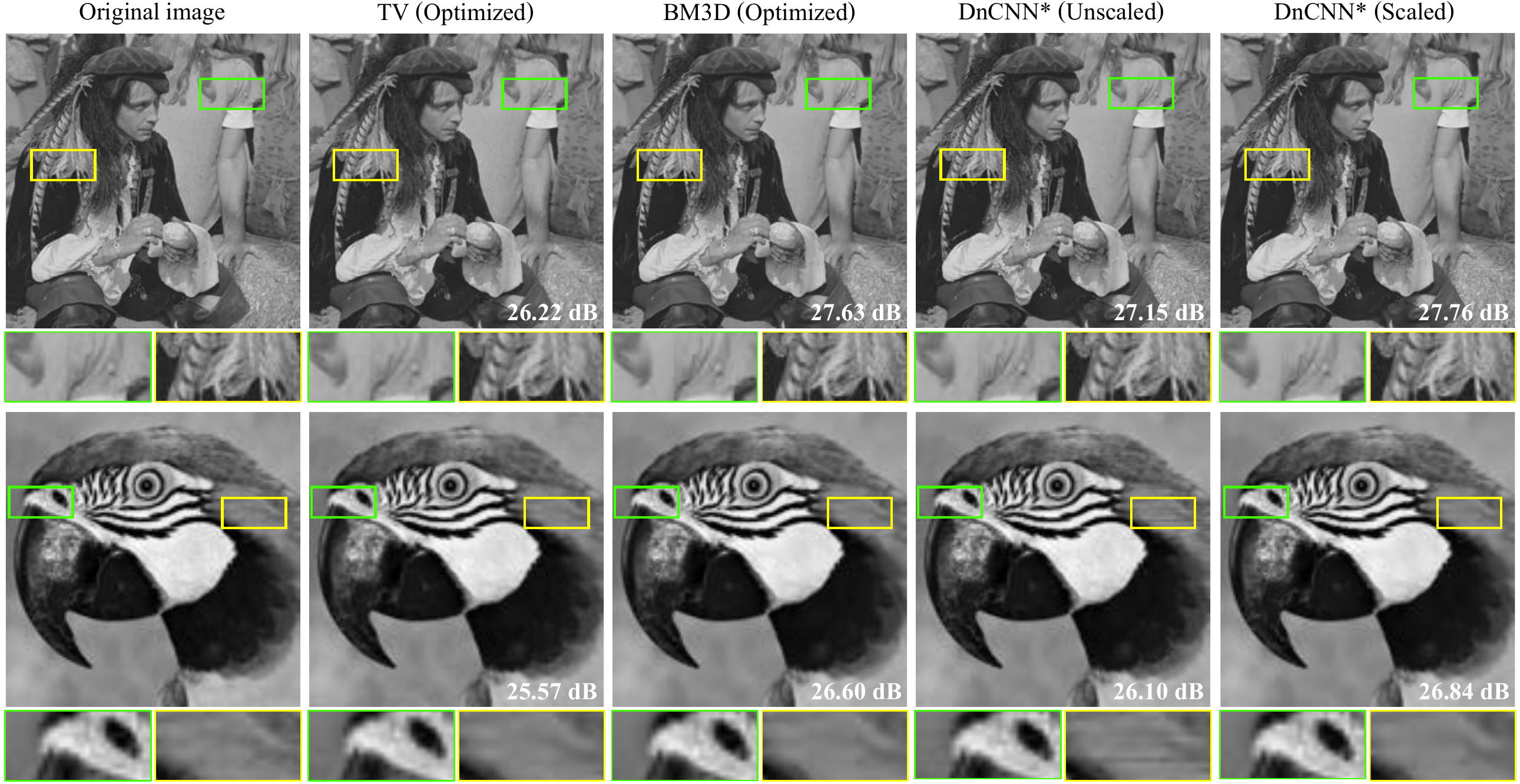}
        \caption{Visual illustration of denoiser scaling on the subsampled Fourier operator and two natural images: \textit{Man} (top) and \textit{Parrot} (bottom). The sampling rate is $m/n = 1/3$ and input SNR is 30 dB. $\DnCNNast$ is selected from $\sigma \in \{1, 5, 10\}$ that produces the best SNR performance. $\DnCNNast$ (Scaled) relies on the same CNN selected by $\DnCNNast$ (Unscaled). This result illustrates the potential of $\DnCNNast$ (Scaled) to boost the performance of $\DnCNNast$ (Unscaled) by reducing the artifacts.}
        \label{Fig:mri_ni}
\end{figure*}

\begin{figure*}[t]
	\centering\includegraphics[width=\textwidth]{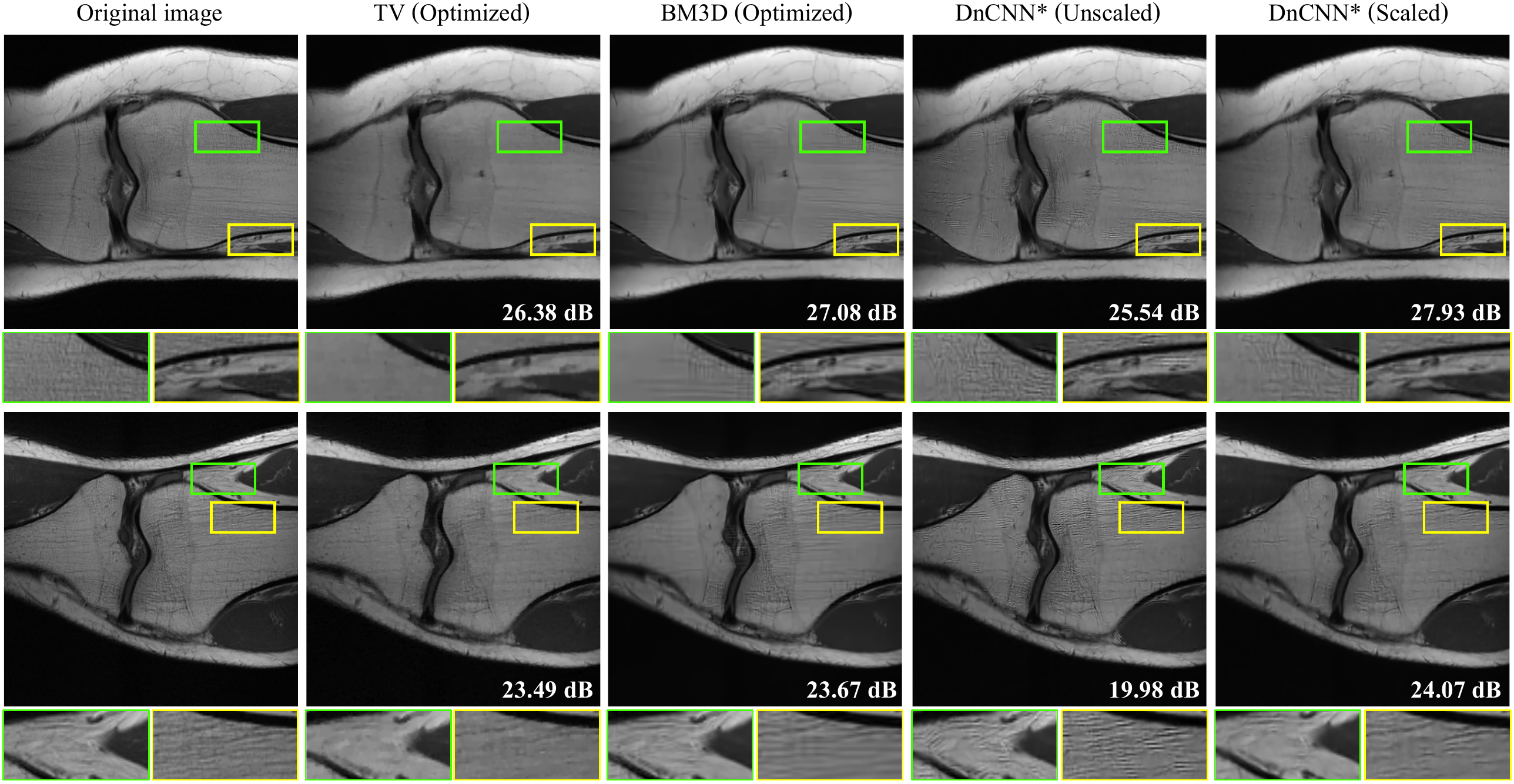}
	\caption{Visual illustration of denoiser scaling on the subsampled Fourier operator and two medical images: \textit{Knee 1} (top) and \textit{Knee 2} (bottom) from the fastMRI dataset. The sampling ration is $m/n = 1/3$ and input SNR is 30 dB. $\DnCNNast$ is selected from $\sigma \in \{1, 5, 10\}$ that produces the best SNR performance. $\DnCNNast$ (Scaled) relies on the same CNN selected by $\DnCNNast$ (Unscaled). Note how $\DnCNNast$ (Scaled) improves the visual quality of results compared to $\DnCNNast$ (Unscaled).}
	\label{Fig:mri_ki}
\end{figure*}
\begin{figure*}[t]
        \centering\includegraphics[width=\textwidth]{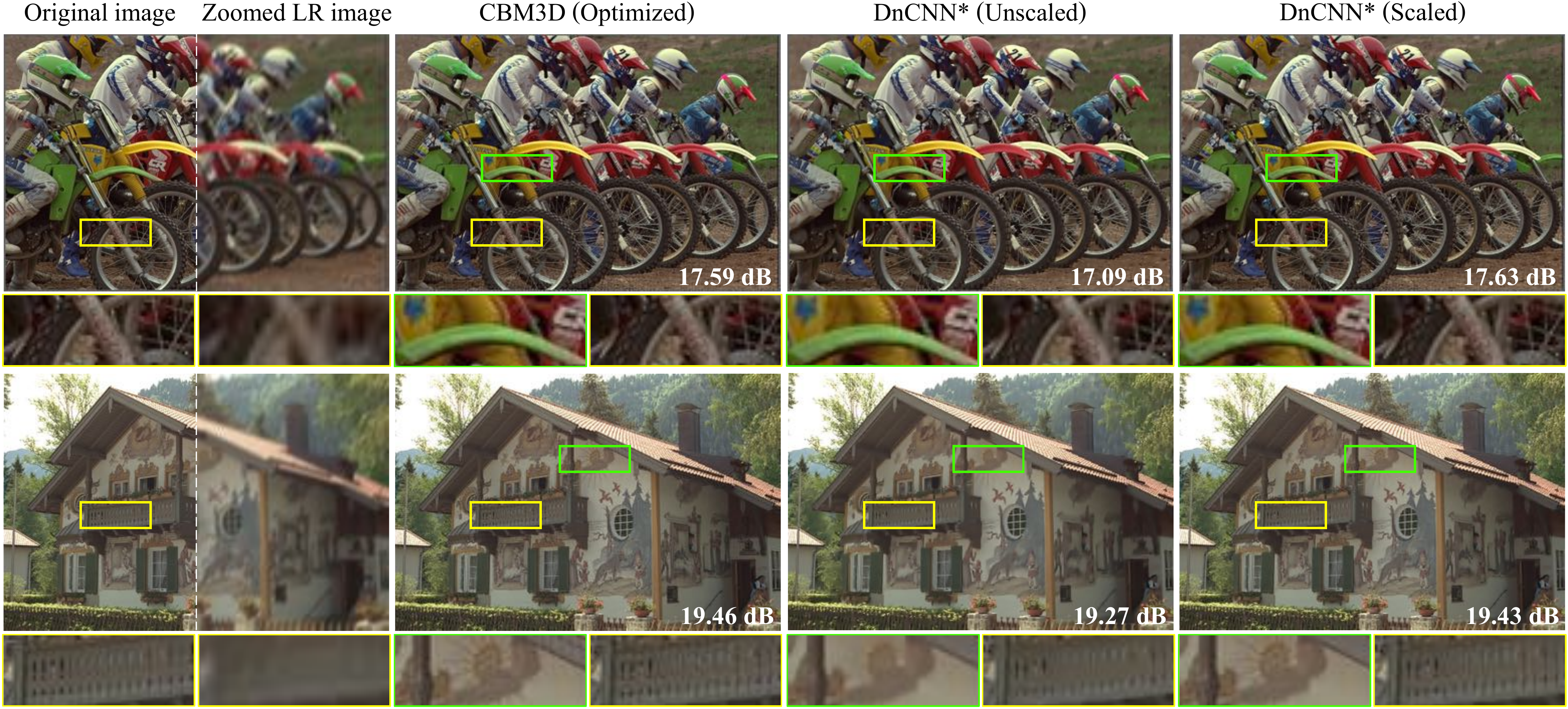}
        \caption{Results of SR simulation on color image \textit{Bikes} (top) and \textit{Paintedhouse} (bottom). The input SNR is 40 dB. $\DnCNNast$ is selected from $\sigma \in \{1, 5, 10\}$ that produces the best SNR performance. $\DnCNNast$ (Scaled) relies on the same CNN selected by $\DnCNNast$ (Unscaled). One can see while $\DnCNNast$ blur out the details in some regions,  $\DnCNNast$ (Scaled) can generate images with more details and sharper edges.}
        \label{Fig:sr}
\end{figure*}

In this section, we demonstrate the ability of denoiser scaling to boost the performance of PnP. Our experiments consider imaging inverse problems of the form ${\ybm = \Hbm \xbm + \ebm}$, where  $\ybm \in  \R^m$ denotes the measurements, $\ebm \in \R^m$ is a vector of AWGN with zero mean and standard deviation $\sigma$, and ${\Hbm \in \R^{m \times n}}$ denotes the forward operator.  We focus on three inverse problems: image denoising, subsampled Fourier inversion, which commonly used in magnetic resonance imaging (MRI), and single image super-resolution (SR), where the forward models correspond to the identity matrix, radially subsampled two-dimensional Fourier transform, and blurring-downsampling operator, respectively. For each simulation, the measurements are corrupted with AWGN quantified through the input signal-to-noise ratio (SNR). We will consider three denoisers for PnP: TV, BM3D, and our own residual $\DnCNNast$. $\DnCNNast$ is a simplified variant of the standard DnCNN~\cite{Zhang.etal2017}, where our simplifications correspond to the removal of batch-normalization layers and reduction in the total number of layers (see Appendix~\ref{Sec:CNNstructure} for details). This simplification reduces the computational cost of applying the denoiser across multiple PnP iterations. We use $\sigma$ to parameterize BM3D and $\DnCNNast$. For BM3D $\sigma$ represents the parameter of the denoiser represenging the standard deviation of noise and for $\DnCNNast$ it represents the standard deviation of the noise used for training the CNN. We follow~\cite{Chen.Pock2016} and use 400 images of size $180 \times 180$ to train three $\DnCNNast$ instances, on natural grayscale images, for the removal of AWGN at three noise levels, $\sigma \in \{1, 5, 10\}$. For some of the experiments, we also use a constrained variant of BM3D, where $\sigma$ is restricted to the same set. For the medical knee images,  we follow the work in~\cite{Sun.etal2019b} to train our 7-layer $\DnCNNast$ on NYU fastMRI dataset~\cite{Zbontar.etal2018} for $\sigma \in \{1, 5, 10\}$. For color images, we use $\DnCNNast$ trained on 4744 images from the Waterloo Exploration Database~\cite{ma2016waterloo} and use the CBM3D denoiser~\cite{Dabov.etal2007a}.  For all experiments in SR and Fourier, we use PnP-FISTA as the reconstruction algorithm. All the quantitative results in the tables are averaged over seven test images shown in Figure~\ref{Fig:testimg} with hyperparameters optimized individually for each image for the best SNR performance using grid search.  All visual results are shown with SNR value displayed directly on the images. None of the test images were used in training.

Table ~\ref{Tab:denoise} shows the ability of denoiser scaling to adjust the denoising strength for three different image denoisers at four input SNR levels: 15 dB, 20 dB, 25 dB and 30 dB. In the table, TV (Optimized) is obtained by tuning the regularization parameter $\lambda$ for each test image. For TV (Scaled), we fix $\lambda = 1$ and tune the scaling parameter $\mu$ for the best result. The performances of BM3D (Unscaled and Scaled) and $\DnCNNast$ (Unscaled and Scaled) correspond to the best instance selected from the limited set of $\sigma \in \{1, 5, 10\}$. For reference, we also show BM3D (Optimized), which uses fully optimized $\sigma$, and DnCNN (Optimized), which is trained on noisy images with the true input SNR. Table~\ref{Tab:denoise} highlights the equivalence between TV with an optimized $\lambda$ and TV with $\lambda = 1$, but optimized $\mu$, which validates eq.~\eqref{Eq:LinScal}. Table~\ref{Tab:denoise} also shows that denoiser scaling significantly improves the performance of sub-optimally tuned BM3D and $\DnCNNast$ to achieve the performance of the corresponding denoiser with optimized $\sigma$.

Figure~\ref{Fig:denoise} visually illustrates the performance of denoiser scaling on the problem of color image denoising for AWGN of $\sigma = 30$ (top) and $\sigma = 40$ (bottom). In the figure, $\DnCNNast$ (Optimized) denotes to the denoiser trained using the dataset with the correct noise level. On the other hand, $\DnCNNast$ (Unscaled) and $\DnCNNast$ (Scaled) correspond to the same CNN instance, trained for noise level $\sigma = 20$. $\DnCNNast$ (Unscaled) uses $\mu = 1$, while $\DnCNNast$ (Scaled) optimizes $\mu$ for the best SNR performance. By simply adjusting $\mu$, the suboptimaly trained $\DnCNNast$ achieves the performance of $\DnCNNast$ trained using the correct noise level.


Figure~\ref{Fig:scale} considers the problem of denoising an image with the input SNR of 25 dB, which corresponds to the noise level $\sigma = 7.23$. The figure shows the influence of the parameter $\mu$ for improving the performance of denoisers at mismatched values of $\sigma = 1$, $5$, and $10$. The SNR value after denoising is plotted against the logarithm of $\mu$. For TV, we fixed the regularization parameter $\lambda$ to its optimal value for the noise levels $\sigma \in \{1, 5, 10\}$ and then adjusted the parameter $\mu$. For each plot, we also provide the performance of the denoisers with optimized $\sigma$. The comparison between unscaled denoisers at $\mu = 1$ and their scaled counterparts demonstrate the potential of denoiser scaling to influence the final performance, validating the theoretical conclusion that the denoiser scaling directly controls the strength of regularization. This is particularly appealing for $\DnCNNast$, which does not have a tunable parameter $\sigma$.

Figure~\ref{Fig:evolution} visually and quantitatively illustrates the influence of the parameter $\mu$ for color image denoising. The $\DnCNNast$ trained on noise level $\sigma = 20$ is applied for denoising an image with a noise level $\sigma = 30$. The performance of scaled DnCNN achieves its peak SNR at $\mu^\ast = 0.68$ and leads to either under- or over-regularization at either side of this value.

Table~\ref{Tab:mri_sr} shows results of image reconstruction for both Fourier and SR under two input noise levels 30 dB and 40 dB. Here, we fix $\sigma = 1$ for both BM3D and $\DnCNNast$. We aslo show BM3D (Optimized), which uses fully optimized $\sigma$. For Fourier, the measurement ratio is set to be approximately $m/n = 1/3$. For SR, the low resolution (LR) image is simulated by convolving the high resolution image (HR) with a motion-blur kernel of size $19 \times 19$ from~\cite{Liu.etal2019}, followed by down-sampling with scale factor 2. Note that TV (Scaled) has a fixed $\lambda= 1$ and a scaling parameter $\mu$ optimized for each test image. Some visual results are shown in Figures~\ref{Fig:mri_ni},~\ref{Fig:mri_ki}, and~\ref{Fig:sr}. These results clearly highlight the potential of denoiser scaling to significantly boost the performance of PnP. The results also highlight that the amount of improvement depends on the inverse problem in question (compare, for example, the results for SR under 30 dB and 40 dB of noise). This is expected since a specific regularizer might be already well suited for a given forward model (even without any denoiser scaling).


\section{Conclusion}


We have presented a simple, but effective, technique for improving the performance of PnP algorithms. The approach is justified theoretically by connecting the denoiser scaling with the strength of the effective regularization introduced by PnP. The proposed technique is shown to be particularly valuable when PnP is used with CNN denoisers that have no explicit tunable parameters. Our experimental results show the potential of denoiser scaling to significantly improve the performance of PnP algorithms across several inverse problems. While we have focused on PnP, the denoiser scaling approach can be applied more broadly to improve the performance of related methods, such as AMP and RED.


\appendix

\section{Proof of Proposition~\ref{Prop:ProxOp}}
\label{Sec:ProofProxDen}

The proof below is a direct consequence of definition~\eqref{Eq:ProximalOperator}. It is similar to the derivation of other properties of the proximal operator that have been extensively described in the literature (see for example~\cite[Ch. 6]{Beck2017a}). For any $\zbm \in \R^n$, we have that
\begin{align*}
&\Dsf_\mu(\zbm) = (1/\mu) \cdot \prox_{h}(\mu\cdot\zbm) \\
&=(1/\mu) \cdot \argmin_{\xbm} \left\{\frac{1}{2}\|\xbm-\mu\zbm\|_2^2 + h(\xbm)\right\} \\
&= (1/\mu) \cdot \argmin_{\xbm} \left\{\frac{1}{2}\left\|(\xbm/\mu)-\zbm\right\|_2^2 + (1/\mu^2) h(\xbm)\right\} \\
&= (1/\mu) \cdot \mu \cdot \argmin_{\ubm} \left\{\frac{1}{2}\|\ubm-\zbm\|_2^2 + (1/\mu^2)h(\mu\ubm )\right\} \\
& = \prox_{(1/\mu^2) h(\mu \cdot)}(\zbm),
\end{align*}
where in the second and the last lines we used the definition of the proximal operator, and in the fourth line we performed the variable change $\ubm = \xbm/\mu$.

\section{Proof of Proposition~\ref{Prop:MMSEden}}
\label{Sec:ProofMMSEDen}

This result is a direct consequence of the definition of the MMSE denoiser. For any $\zbm \in \R^n$, we have that
\begin{align*}
&\Dsf_\mu(\zbm) = (1/\mu) \cdot \Dsf(\mu\cdot\zbm) \\
&=(1/\mu) \cdot \frac{\int_{\R^n} \xbm \phi_1(\xbm-\mu\zbm) p_\xbm(\xbm) \d \xbm}{\int_{\R^n} \phi_1(\xbm-\mu\zbm) p_\xbm(\xbm) \d \xbm} \\
&=(1/\mu) \cdot \frac{\int_{\R^n} \xbm \phi_{(1/\mu^2)}(\xbm/\mu-\zbm) p_\xbm(\xbm) \d \xbm}{\int_{\R^n} \phi_{(1/\mu^2)}(\xbm/\mu-\zbm) p_\xbm(\xbm) \d \xbm}\\
&=(1/\mu) \cdot \frac{\int_{\R^n} (\mu\ubm) \phi_{(1/\mu^2)}(\ubm-\zbm) p_\xbm(\mu\ubm) \d \ubm}{\int_{\R^n} \phi_{(1/\mu^2)}(\ubm-\zbm) p_\xbm(\mu\ubm) \d \ubm},
\end{align*}
where we defined the probability density function of AWGN of variance $\nu > 0$ as
$$\phi_{\nu}(\xbm) \defn \frac{1}{\sqrt{2\pi\nu}}\e^{-\frac{\|\xbm\|^2}{2\nu}}.$$
The final line corresponds to the MMSE estimate of a random variable $\ubm \sim p_\xbm(\mu\cdot)$ from AWGN of variance $1/\mu^2$.

\section{Proof of Proposition~\ref{Prop:CEden}}
\label{Sec:ProofCEden}

It has already been shown that for any continuous denoiser both PnP-ADMM and PnP-ISTA have the same set of fixed points~\cite{Sun.etal2018a}. Consider any fixed point $\xbm$ of PnP-ISTA
\begin{align*}
&\xbm = \Dsf_\mu(\xbm-\gamma \nabla g(\xbm)) \\
&\Leftrightarrow\quad \mu\xbm = \Dsf(\mu\xbm -\gamma \mu \nabla g(\xbm)) \\
&\Leftrightarrow\quad
\begin{dcases}
\mu\xbm = \Dsf(\mu\xbm-\zbm)\\
\zbm = \gamma\mu\nabla g(\xbm)
\end{dcases} \\
&\Leftrightarrow\quad
\begin{dcases}
\mu\xbm = \Dsf(\mu\xbm-\zbm)\\
\mu\xbm = \prox_{(\gamma \mu^2)g(\cdot/\mu)}(\mu\xbm+\zbm),
\end{dcases}
\end{align*}
which directly leads to the result. To see the last equivalence, assume that $g$ is a smooth and convex function and note that
\begin{align*}
&\mu\xbm = \prox_{(\mu^2\gamma)g(\cdot/\mu)}\left(\mu\xbm + \zbm\right) \\
&= \argmin_{\ubm} \left\{\frac{1}{2}\left\|\ubm - \left(\mu\xbm + \zbm\right)\right\|_2^2 + (\gamma\mu^2)g(\ubm/\mu)\right\}\\
&\Leftrightarrow\quad \mu\xbm - (\mu\xbm + \zbm) + (\gamma\mu^2) \cdot (1/\mu) \cdot \nabla g(\mu\xbm/\mu) = \zerobm \\
&\Leftrightarrow\quad \zbm = \gamma\mu \nabla g(\xbm),
\end{align*}
where we used the optimality conditions.

\begin{figure}[t]
        \centering\includegraphics[width=8.5cm]{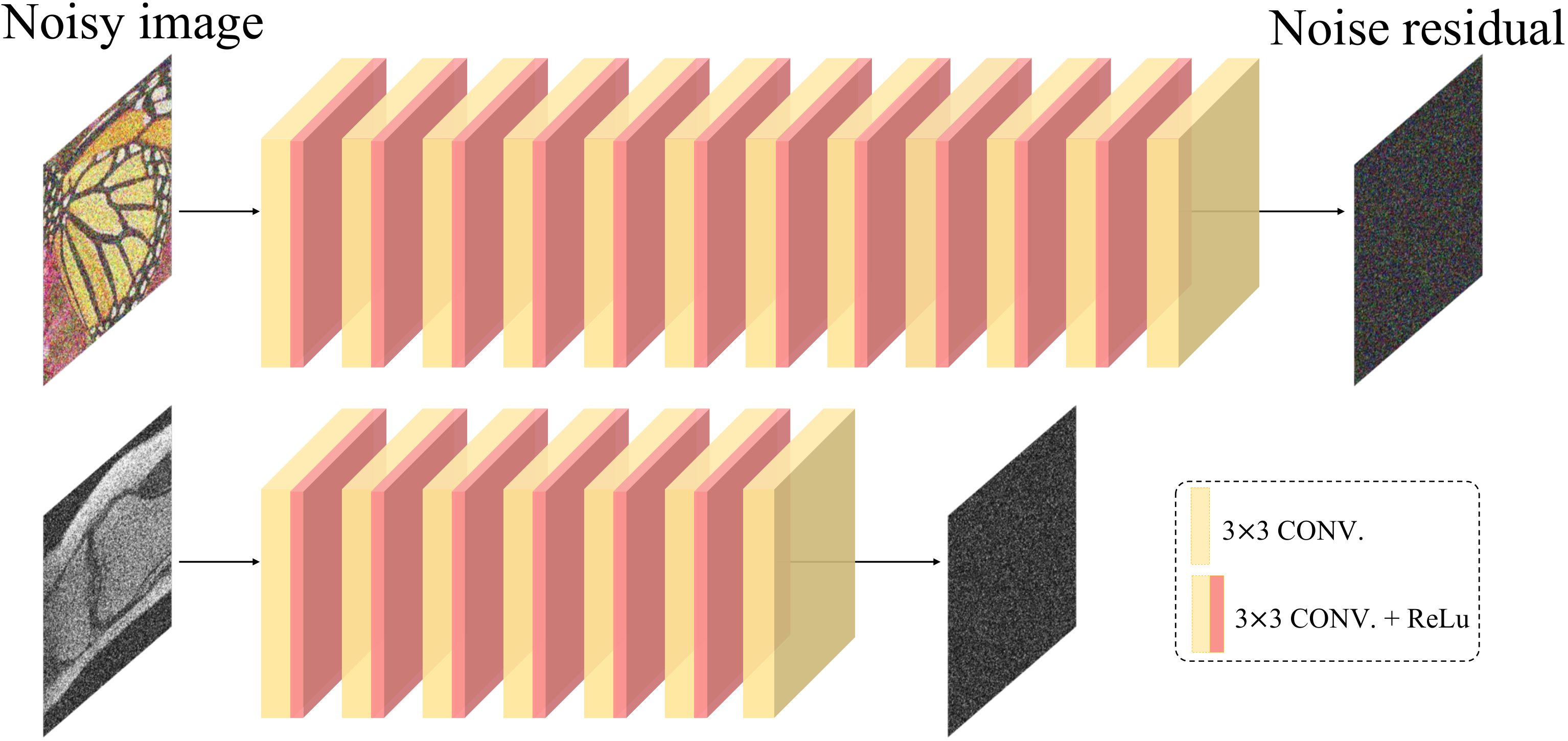}
        \caption{The architecture of two variants of $\DnCNNast$ we use in our simulations. $\DnCNNast$ (top) is applied on natural images, and $\DnCNNast$ (bottom) is applied on the medical knee images. Both neural nets are trained to predict the AWGN from the input. The final desired denoiser $\Dsf$ is obtained by simply subtracting the predicted noise from the input $\Dsf(\xbm) = \xbm - \DnCNNast(\xbm)$.}
        \label{Fig:net}
\end{figure}

\section{Architecture of the $\DnCNNast$ denoiser}
\label{Sec:CNNstructure}
Two variants of the residual $\DnCNNast$, shown in Figure~\ref{Fig:net}, are used in our simulations. The $\DnCNNast$ of 12 convolutional layers is used for natural images. The $\DnCNNast$ of 7 layers from~\cite{Sun.etal2019b} is used for the knee images from the NYU fastMRI dataset~\cite{Zbontar.etal2018}. The latter has a bounded Lipschitz constant $L = 2$, providing a necessary but not sufficient condition for $\Dsf$ to be a nonexpansive denoiser. As discussed in~\cite{Sun.etal2019b}, the Lipschitz constant is controlled via spectral-normalization~\cite{Sedghi.etal2019}. In both neural nets, the first layer convolves the $H \times W \times C$ input to $H \times W \times 64$ features maps by using 64 filters of size $3 \times 3$.  Here $W$ and $H$ represents the height and width of input image, and $C$ represents the number of image channels. For example, $C = 1$ for gray image and $C = 3$ for color image. The last layer is a single convolutional layer, generating the final output image with $C$ channels by convolving the feature maps with a $3 \times 3 \times 64$ filter. The layers in-between are convolutional layers, each having 64 filters of size $3 \times 3 \times 64$. All layers except for the last one is followed by a rectified linear units (ReLU) layer. Every convolution is performed with a stride = 1, so that the intermediate feature maps share the same spatial size of the input image.


\bibliographystyle{IEEEtran}


\end{document}